\newtheorem{theorem}{Theorem}[section]
\newtheorem{lemma}[theorem]{Lemma}
\newtheorem{corollary}[theorem]{Corollary}
\newtheorem{proposition}[theorem]{Proposition}
\newtheorem{fact}[theorem]{Observation}
\def\moverlay{\mathpalette\mov@rlay}
\def\mov@rlay#1#2{\leavevmode\vtop{%
    \baselineskip\z@skip \lineskiplimit-\maxdimen
    \ialign{\hfil$\m@th#1##$\hfil\cr#2\crcr}}}
\newcommand{\charfusion}[3][\mathord]{
  #1{\ifx#1\mathop\vphantom{#2}\fi
    \mathpalette\mov@rlay{#2\cr#3}
  }
  \ifx#1\mathop\expandafter\displaylimits\fi}
\DeclareRobustCommand\bigop[1]{%
  \mathop{\vphantom{\sum}\mathpalette\bigop@{#1}}\slimits@
}
\newcommand{\bigop@}[2]{%
  \vcenter{%
    \sbox\z@{$#1\sum$}%
    \hbox{\resizebox{\ifx#1\displaystyle.9\fi\dimexpr\ht\z@+\dp\z@}{!}{$\m@th#2$}}%
  }%
}
\newcommand{\lca}{\operatorname{lca}}
\newcommand{\parent}{\operatorname{parent}}
\newcommand{\child}{\operatorname{child}}
\newcommand{\HH}{\mathcal{H}}
\DeclareMathOperator{\build}{\mathtt{BUILD}}
\providecommand{\keywords}[1]{\textbf{\textit{Keywords: }} #1}
\title{A Simple Linear-Time Algorithm for the Common Refinement of Rooted 
Phylogenetic 
Trees on a Common Leaf Set}
\author[1]{David Schaller}
\author[2]{Marc Hellmuth}
\author[1,3-7]{Peter F.\ Stadler}
\affil[1]{Bioinformatics Group, Department of Computer Science, and
  Interdisciplinary Center for Bioinformatics, Universit{\"a}t Leipzig,
  H{\"a}rtelstrasse 16-18, D-04107 Leipzig, Germany
  \authorcr \texttt{sdavid@bioinf.uni-leipzig.de} $\cdot$ 
  \texttt{studla@bioinf.uni-leipzig.de}}
\affil[2]{Department of Mathematics, Faculty of Science,
  Stockholm University, SE-10691 Stockholm, Sweden
  \authorcr \texttt{marc.hellmuth@math.su.se}}
\affil[3]{Competence Center for Scalable Data Services and Solutions
Dresden/Leipzig, German Centre for Integrative Biodiversity Research
(iDiv), and Leipzig Research Center for Civilization Diseases, Universit{\"a}t 
Leipzig, Augustusplatz 12, D-04107
Leipzig, Germany.}
\affil[4]{Max Planck Institute for Mathematics in the Sciences,
  Inselstra{\ss}e 22, D-04103 Leipzig, Germany}
\affil[5]{Department of Theoretical Chemistry, University of Vienna 
W{\"a}hringer Stra{\ss}e 17, A-1090 Vienna,Austria}
\affil[6]{Facultad de Ciencias, Universidad Nacional de Colombia, Sede 
Bogot{\'a}, Ciudad Universitaria, COL-111321 Bogot{\'a}, D.C., Colombia}
\affil[7]{Santa Fe Institute, 1399 Hyde Park Rd., NM87501 Santa Fe, USA}
\date{\ }
\begin{document}
  
  \maketitle 
  
  \abstract{
    \textbf{Background} The supertree problem, i.e., the task of finding a
    common refinement of a set of rooted trees is an important topic in
    mathematical phylogenetics. The special case of a common leaf set $L$ is
    known to be solvable in linear time. Existing approaches refine one input
    tree using information of the others and then test whether the results
    are isomorphic.
    
    \textbf{Results} A linear-time algorithm, \texttt{LinCR}, for
    constructing the common refinement $T$ of $k$ input trees with a common
    leaf set is proposed that explicitly computes the parent function of $T$
    in a bottom-up approach.
    
    \textbf{Conclusion} \texttt{LinCR} is simpler to implement than other
    asymptotically optimal algorithms for the problem and outperforms the
    alternatives in empirical comparisons.
    
    \textbf{Availability} 
    An implementation of \texttt{LinCR} in
    Python is freely available at
    \url{https://github.com/david-schaller/tralda}.
  }

  \bigskip
  \noindent
  \keywords{
    mathematical phylogenetics,
    rooted trees,
    compatibility of rooted trees}

\sloppy

\section*{Introduction}

Given a collection of rooted phylogenetic trees $T_1$, $T_2$, \dots $T_k$,
the supertree problem in phylogenetics consists in determining whether
there is a common tree $T$ that ``displays'' all input trees $T_i$,
$1\le i\le k$, and if so, a supertree $T$ is to be constructed
\cite{Sanderson:98,Semple:00}. In its most general form, the leaf sets
$L(T_i)$, representing the taxonomic units (taxa), may differ, and the
supertree $T$ has the leaf set $L(T)=\bigcup_{i=1}^k L(T_i)$. Writing
$n\coloneqq |L(T)|$, $N\coloneqq \sum_{i=1}^k |L(T_i)|$, and
$R\coloneqq \sum_{i=1}^k |L(T_i)|^2$, this problem is solved by the
algorithm of Aho et al.\ \cite{Aho:81}, which is commonly called
\texttt{BUILD} in the the phylogenetic literature \cite{Semple:03}, in
$O(N n)$ time for binary trees and $O(R n)$ time in general.

An $O(N^2)$ algorithm to compute all binary trees compatible with the input
is described in \cite{Constantinescu:95}.  Using sophisticated data
structures, the effort for computing a single supertree was reduced to
$O(\min(N \sqrt{n},N+n^2\log n))$ for binary trees and $(R\log^2 R)$ for
arbitrary input trees \cite{Henzinger:99}. Recently, an $O(N \log^2 N)$
algorithm has become available for the compatibility problem for general
trees \cite{Deng:18}. The compatibility problem for nested taxa in addition
assigns labels to inner vertices and can also be solved in $O(N \log^2 N)$
\cite{Deng:17}.

Here we consider the special case that the input trees share the same leaf
set $L(T_1)=L(T_2)=\dots=L(T_k)=L(T)=L$, and thus $N=kn$ and
$R=kn^2$. While the general supertree problem arises naturally when
attempting to reconcile phylogenetic trees produced in independent studies,
the special case appears in particular when incompletely resolved trees are
produced with different methods. In a recent work, we have shown that such
trees can be inferred e.g.\ as the least resolved trees from best match
data \cite{Geiss:19a,Schaller:21d} and from information of horizontal gene
transfer \cite{Geiss:18a,Hellmuth:19}.  Denoting with $\HH(T)$ the set of
``clusters'' in $T$, we recently showed that the latter type of data can be
explained by a common evolutionary scenario if and only if (1) both the
best match and the horizontal transfer data can be explained by least
resolved trees $T_1$ and $T_2$, respectively, and (2) the union
$\HH(T_1)\cup\HH(T_2)$ is again a hierarchy. In this context it is of
practical interest whether the latter property can be tested efficiently,
and whether the common refinement $T$ satisfying
$\HH(T)=\HH(T_1)\cup\HH(T_2)$ \cite{Hellmuth:21x} can be constructed
efficiently in the positive case.

Several linear time, i.e., $O(|L|)$ time, algorithms for the common
refinement of two input trees $T_1$ and $T_2$ with a common leaf set have
become available. The \texttt{INSERT} algorithm \cite{Warnow:94}, which
makes use of ideas from \cite{Gusfield:91}, inserts the clusters of $T_2$
into $T_1$ and \emph{vice versa} and then uses a linear-time algorithm to
check whether the two edited trees are isomorphic \cite{Aho:74}.  Assuming
that the input trees are already known to be compatible,
\texttt{Merge_Trees} \cite{Jansson:13,Jansson:16} can also be applied to
insert the clusters of one tree into the other.  For both of these methods,
an overall linear-time algorithm for the common refinement of $k$ input
trees is then obtained by iteratively computing the common refinement of
the input tree $T_j$ and the common refinement of first $j-1$ trees,
resulting in a total effort of $O(k |L|)$.

Here we describe an alternative algorithm that constructs in a single step
a candidate refinement $T$ of all $k$ input trees. This is achieved by
computing the parent-function of the potential refinement $T$ in a
bottom-up fashion. As we shall see, the algorithm is easy to implement and
does not require elaborate data structures. The existence of a common
refinement is then verified by checking that the parent function defines a
tree $T$ and, if so, that $T$ displays each of the input trees $T_j$. This
test is also much simpler to implement than the isomorphism test for rooted
trees \cite{Aho:74}.

\section*{Theory}

\subsection*{Notation and Preliminaries}

Let $T$ be a rooted tree. We write $V(T)$ for its vertex set, $E(T)$ for is
edge set, $L(T)\subseteq V(T)$ for its leaf set,
$V^0(T)\coloneqq V(T)\setminus L(T)$ for the set of inner vertices and
$\rho\in V^0(T)$ for its root. An edge $e=\{u,v\}\in E(T)$ is an
\emph{inner} edge if $u,v\in V^0(T)$.  The ancestor partial order
$\preceq_T$ on $V(T)$ is defined by $x\preceq_T y$ whenever $y$ lies along
the unique path connecting $x$ and the root $\rho$. If $x\preceq_T y$ and
$x\ne y$, we write $x \prec_T y$.  For $v\in V(T)$, we set
$\child_T(v)\coloneqq\{u\mid\{v,u\}\in E(T),\, u\prec_T v\}$. If
$u\in\child_T(v)$, then $v$ is the unique parent of $u$. In this case, we write
$v=\parent_T(u)$. All trees $T$ considered in this contribution are
\emph{phylogenetic}, i.e., they satisfy $|\child_T(v)|\ge 2$ for all
$v\in V^0(T)$.

We denote by $T(u)$ the subtree of $T$ rooted in $u$ and write $L(T(u))$
for its leaf set. The \emph{last common ancestor} of a vertex set
$W\subseteq V(T)$ is the unique $\preceq_T$-minimal vertex
$\lca_T(W)\in V(T)$ satisfying $w\preceq_T\lca_T(W)$ for all $w\in W$. For
brevity, we write $\lca_T(x,y)\coloneqq\lca_T(\{x,y\})$. Furthermore, we
will sometimes write $vu\in E(T)$ as a shorthand for ``$\{u,v\}\in E(T)$
with $u\prec_T v$.'' 

A hierarchy on $L$ is set system $\HH\subseteq 2^L$ such that (i)
$L\in\HH$, (ii) $A\cap B\in\{A,B,\emptyset\}$ for all $A,B\in\HH$, and
(iii) $\{x\}\in\HH$ for all $x\in L$. There is a well-known bijection
between rooted phylogenetic trees $T$ with leaf set $L$ and hierarchies on
$L$, see e.g.\ \cite[Thm.\ 3.5.2]{Semple:03}. It is given by
$\HH(T) \coloneqq \{ L(T(u)) \mid u\in V(T) \}$; conversely, the tree
$T_{\HH}$ corresponding to a hierarchy $\HH$ is the Hasse diagram w.r.t.\
set inclusion. Thus, if $v=\lca_T(A)$ for some $A\subseteq L(T)$, then
$L(T(v))$ is the inclusion-minimal cluster in $\HH(T)$ that contains $A$,
see e.g.\ \cite{Hellmuth:21q}.  We call the elements of $\HH(T)$
\emph{clusters} and say that two clusters $C$ and $C'$ are
\emph{compatible} if $C\cap C'\in\{C,C',\emptyset\}$.  Note that, by (i),
the clusters of the same tree are all pairwise compatible.

A (rooted) triple is a binary tree on three leaves. We say that a tree $T$
displays a triple $xy|z$ if $\lca_T(x,y)\prec_T\lca_T(x,z)=\lca_T(y,z)$, or
equivalently, if there is a cluster $C\in \HH(T)$ such that $x,y \in C$ and
$z\notin C$. The set of all triples that are displayed by $T$ is denoted by
$r(T)$. A set $\mathcal{R}$ of triples is \emph{consistent} if there is a
tree that displays all triples in $\mathcal{R}$.

Let $T$ and $T^*$ be phylogenetic trees with $L(T)=L(T^*)$. We say that
$T^*$ is a \emph{refinement} of $T$ if $T$ can be obtained from $T^*$ by
contracting a subset of inner edges. Equivalently, $T^*$ is a refinement of
$T$ if and only if $\HH(T)\subseteq\HH(T^*)$. A tree $T$ \emph{displays} a
tree $T'$ if $L(T')\subseteq L(T)$ and
$\HH(T')\subseteq \{C\cap L(T') \mid C\in\HH(T) \text{ and } C\cap
L(T')\ne\emptyset\}$.  In particular, therefore, $T$ displays a tree $T'$
with $L(T')=L(T)$ if and only if $\HH(T')\subseteq\HH(T)$, i.e., if and
only if $T$ is a refinement of $T'$. The minimal \emph{common refinement}
of the trees $T_i$, $1\le i\le k$ is the tree $T$ such that
$\HH(T)=\bigcup_{i=1}^k \HH(T_i)$, provided it exists.

Thm.~3.5.2 of \cite{Semple:03} can be rephrased in the following form: 
\begin{lemma}
  \label{lem:leastresolved}
  Let $T_1$, $T_2$, \dots, $T_k$ be trees with common leaf set $L(T_i)=L$
  such that $\HH\coloneqq \bigcup_{i=1}^k\HH(T_i)$ is a hierarchy. Then
  there is a unique tree $T$ such that $\HH(T)=\HH$. Furthermore, $T$ is
  the unique ``least resolved'' tree in the sense that contraction of any
  edge in $T$ yields a tree $T_e$ with $\HH(T_e)\subsetneq \HH(T)$.
\end{lemma}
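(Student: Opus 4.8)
The plan is to obtain both assertions from the tree--hierarchy bijection \cite[Thm.~3.5.2]{Semple:03}, supplemented by the fact that in a phylogenetic tree distinct vertices carry distinct clusters.

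For existence and uniqueness of $T$, I would invoke the bijection directly: by hypothesis $\HH\coloneqq\bigcup_{i=1}^k\HH(T_i)$ is a hierarchy on $L$, so the cited theorem supplies a unique phylogenetic tree $T=T_{\HH}$ with $\HH(T)=\HH$. This settles the first claim outright, the only thing to record being that the hypothesis is exactly what makes the bijection applicable.

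For the least-resolved property I would first establish injectivity of the cluster map $u\mapsto L(T(u))$: if $u\prec_T w$ then $w$ has at least two children and hence $L(T(u))\subsetneq L(T(w))$, while if $u$ and $w$ are $\preceq_T$-incomparable their subtrees are leaf-disjoint; in either case $L(T(u))\ne L(T(w))$, so $|\HH(T)|=|V(T)|$. Next I would track the effect of contracting an inner edge $e=vu$ with $u\prec_T v$: the vertex $u$ is deleted and its children are attached to $v$, which leaves $T_e$ phylogenetic (the merged $v$ keeps at least two children) and, crucially, leaves $L(T_e(w))=L(T(w))$ for every surviving vertex $w\ne u$, since the set of leaves descending from any vertex other than $u$ is unaffected by the contraction. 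Hence $\HH(T_e)=\{L(T(w))\mid w\in V(T)\setminus\{u\}\}$, which by injectivity equals $\HH(T)\setminus\{L(T(u))\}\subsetneq\HH(T)$. As $e$ was an arbitrary inner edge, no edge may be contracted without losing a cluster, and the uniqueness of this least-resolved tree is already delivered by the first part.

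The only step needing genuine care is the cluster bookkeeping, i.e.\ checking that contraction removes \emph{exactly} the cluster $L(T(u))$ and creates no new one. This is precisely where the phylogenetic assumption $|\child_T(v)|\ge2$ is used twice over: to guarantee $L(T(u))\subsetneq L(T(v))$, so that the cluster is truly lost rather than absorbed into $L(T(v))$, and to guarantee that $T_e$ remains phylogenetic. Everything else follows formally from the bijection.
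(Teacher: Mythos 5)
Your proposal is correct and follows essentially the same route as the paper: existence and uniqueness come directly from the tree--hierarchy bijection, and the least-resolved property comes from observing that contracting an inner edge removes exactly the cluster of its lower endpoint, so that $\HH(T_e)\subsetneq\HH(T)$. You merely spell out the cluster bookkeeping (injectivity of $u\mapsto L(T(u))$ and the fact that no new clusters arise) that the paper's one-line argument takes for granted, while the paper additionally notes the removed cluster lies in some $\HH(T_i)$ --- a remark not needed for the statement as given.
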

\begin{proof}
  By definition of $\HH$ and the bijection between phylogenetic trees
  and hierarchies, there is a unique tree $T$ such that
  $\HH=\HH(T)$. Consider an inner edge $e=uv$. By construction, there is
  at least one tree $T_v$ such that $C\coloneqq
  L(T(v))\in\HH(T_v)$. However, $\HH(T_e)=\HH(T)\setminus\{C_v\}$ and
  thus $T_e$ does not display $T_v$.
\end{proof}

By Thm.~1 in \cite{Bryant:95}, a tree $T'$ is displayed by a tree $T$ with
$L(T')\subseteq L(T)$ if and only if $r(T')\subseteq r(T)$.  As an
immediate consequence, a common refinement of trees with a common leaf set
$L$ exists if and only if the union $L$ of their triple sets is consistent.
The latter condition can be checked using the \texttt{BUILD} algorithm
which, in the positive case, returns a tree $\build(\mathcal{R},L)$ that
displays all triples in $\mathcal{R}$.

\begin{lemma}
  \label{lem:Aho}
  Suppose that $T$ is the unique least resolved common refinement of the
  trees $T_1$, $T_2$, \dots, $T_k$ with common leaf set $L(T_i)=L$,
  $1\le i\le k$ and let $\mathcal{R}\coloneqq r(T_i)\cup r(T_2)\cup \dots
  \cup r(T_k)$. Then $T=\build(\mathcal{R},L)$.
\end{lemma}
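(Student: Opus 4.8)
The plan is to prove the statement by induction on $n=|L|$, following the recursive structure of the \texttt{BUILD} algorithm and checking at each level that its choices agree with the tree $T$. Since $T$ is a common refinement, it displays every $T_i$, so Thm.~1 of \cite{Bryant:95} gives $r(T_i)\subseteq r(T)$ and hence $\mathcal{R}\subseteq r(T)$; in particular $\mathcal{R}$ is consistent and \texttt{BUILD} does not abort. Write $\rho$ for the root of $T$ and let $v_1,\dots,v_m$ be its children with $C_j\coloneqq L(T(v_j))$; because $T$ is phylogenetic we have $m\ge 2$, the sets $C_1,\dots,C_m$ partition $L$, and they are exactly the inclusion-maximal proper clusters of $\HH(T)$. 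The heart of the argument is to show that the connected components of the Aho graph $[\mathcal{R},L]$ (vertex set $L$, an edge $\{x,y\}$ for every triple $xy|z\in\mathcal{R}$) are precisely $C_1,\dots,C_m$.

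For one inclusion I would argue that every edge stays inside a single $C_j$: an edge $\{x,y\}$ witnesses some $xy|z\in r(T_i)$, i.e.\ a cluster $C\in\HH(T_i)\subseteq\HH(T)$ with $x,y\in C$ and $z\notin C$. Then $C\ne L$, so the hierarchy $\HH(T)$ forces $C$ into a unique maximal proper cluster $C_j\supseteq C$, whence $x,y\in C_j$. For the reverse inclusion I would show each $C_j$ is not merely connected but a clique: since $\HH(T)=\bigcup_i\HH(T_i)$ and $C_j\in\HH(T)$, there is an index $i_0$ with $C_j\in\HH(T_{i_0})$, and as $C_j\subsetneq L$ we may pick $z\notin C_j$; then for any distinct $x,y\in C_j$ the tree $T_{i_0}$ displays $xy|z$, so $\{x,y\}$ is an edge of $[\mathcal{R},L]$. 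Combining the two inclusions, the components of $[\mathcal{R},L]$ are exactly $C_1,\dots,C_m$, so \texttt{BUILD} creates a root whose child subtrees are built on the leaf sets $C_1,\dots,C_m$, matching the top level of $T$.

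To close the induction I would verify that each subtree $T(v_j)$ is again the least resolved common refinement of the restricted instance on which \texttt{BUILD} recurses. Using that $\HH(T)$ is a hierarchy containing every $C\in\HH(T_i)$ and every $C_j$, each intersection $C\cap C_j$ lies in $\{C,C_j,\emptyset\}$, which yields the two bookkeeping identities $\HH(T(v_j))=\{C\in\HH(T):C\subseteq C_j\}=\bigcup_i\HH(T_i|_{C_j})$ and $\mathcal{R}|_{C_j}=\bigcup_i r(T_i|_{C_j})$, where $T_i|_{C_j}$ denotes the restriction of $T_i$ to $C_j$ and $\mathcal{R}|_{C_j}$ the triples of $\mathcal{R}$ with all three leaves in $C_j$ (here I also use the standard fact $r(T_i|_{C_j})=r(T_i)|_{C_j}$). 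Thus $T(v_j)$ together with $\bigcup_i r(T_i|_{C_j})$ is an instance of exactly the same type on the smaller leaf set $C_j$, its cluster union being the hierarchy $\HH(T(v_j))$, so the induction hypothesis gives $\build(\mathcal{R}|_{C_j},C_j)=T(v_j)$; the base case $|L|=1$ is immediate. Assembling the recursive calls then gives $\build(\mathcal{R},L)=T$.

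The main obstacle is the reverse inclusion in the component analysis, i.e.\ showing that each maximal proper cluster $C_j$ stays connected in the Aho graph. This is exactly where the special structure of the problem enters: for a union of triple sets of \emph{arbitrary} trees this can fail, but because $\HH(T)=\bigcup_i\HH(T_i)$ every $C_j$ is literally a cluster of some input tree $T_{i_0}$ and therefore becomes a clique, not merely a connected set. The remaining care lies in the restriction identities, where one must check that passing to $C_j$ creates no spurious clusters or triples; the compatibility $C\cap C_j\in\{C,C_j,\emptyset\}$ coming from the hierarchy $\HH(T)$ is what keeps this step routine.
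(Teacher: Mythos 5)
Your proof is correct, but it takes a genuinely different route from the paper. The paper argues top-down via minimality: $\widehat{T}\coloneqq\build(\mathcal{R},L)$ displays every triple in $\mathcal{R}$, hence (by Thm.~1 of Bryant) every $T_i$, so it is a common refinement and $\HH(T)\subseteq\HH(\widehat{T})$ by Lemma~\ref{lem:leastresolved}; it then cites Prop.~4.1 of Semple~\& Steel to the effect that contracting any edge of $\widehat{T}$ loses a triple of $\mathcal{R}$, and Thm.~6.4.1 to convert that into loss of some displayed $T_i$, so $\widehat{T}$ is itself a least resolved common refinement and uniqueness forces $T=\widehat{T}$. You instead unwind the recursion of \texttt{BUILD} and verify directly, by induction on $|L|$, that the components of the Aho graph at each level are exactly the children clusters of $T$; the key step is that each maximal proper cluster $C_j$ is a \emph{clique} because $\HH(T)=\bigcup_i\HH(T_i)$ puts $C_j$ literally into some input tree $T_{i_0}$, and your restriction identities (using $C\cap C_j\in\{C,C_j,\emptyset\}$ and $r(T_i|_{C_j})=r(T_i)|_{C_j}$) correctly reduce to a smaller instance of the same form. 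What each approach buys: the paper's proof is a few lines but leans on three external results and on the uniqueness statement of Lemma~\ref{lem:leastresolved}; yours is essentially self-contained (needing only the description of \texttt{BUILD} and Bryant's theorem for consistency, which is not even strictly necessary since your component analysis already shows \texttt{BUILD} never aborts), and it makes transparent exactly where the common-leaf-set hypothesis enters --- for unions of triple sets of trees on different leaf sets the clique property fails, which is precisely why \texttt{BUILD} can over-resolve in the general supertree setting.
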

\begin{proof}
  The tree $\widehat{T}\coloneqq \build(\mathcal{R},L)$ is a common
  refinement since, by the arguments above, it displays $T_1$, $T_2$,
  \dots, $T_k$. By Lemma~\ref{lem:leastresolved}, we therefore have
  $\HH(T)\subseteq\HH(\widehat{T})$.  Prop.~4.1 in \cite{Semple:03a}
  implies that $\widehat{T}$ is least resolved w.r.t.\ $\mathcal{R}$, i.e.,
  every tree $\widehat{T}'$ obtained from $\widehat{T}$ by contraction of
  an edge no longer displays all input triples in $\mathcal{R}$.  By
  Thm.~6.4.1 in \cite{Semple:03}, $T_i$ is displayed by $\widehat{T}'$ if
  and only if $\widehat{T}'$ displays all triples of $T_i$. Since this is
  not true for all input trees $T_i$, $\widehat{T}'$ does not display all
  input trees $T_i$, $1\le i\le k$.  Together with
  $\HH(T)\subseteq\HH(\widehat(T))$, this implies that $T=\widehat{T}$.
\end{proof} 

We note that, given a set of triples $R$, ``$T$ is a least resolved
displaying $R$'' does not imply that vertex set $V(T)$ is minimal among all
such trees.  It is possible in general that there is a tree $T'$ displaying
a given triple set $R$ with $|V(T')|<|V(\build(R, L))|$.  In this case,
$\build(R,L)$ does not display $T'$, see \cite{Jansson:12} for
details. However, uniqueness of the least resolved tree,
Lemma~\ref{lem:leastresolved}, rules out this scenario in our setting.

The algorithm \texttt{BuildST} \cite{Deng:18} computes the supertree of a
set $\mathcal{T}\coloneqq \{T_i| 1\le i\le k\}$ of root trees without first
breaking down each tree to its triple set $r(T_i)$. Lemma~5 in
\cite{Deng:18} establishes that \texttt{BuildST} applied to a set of trees
and \texttt{BUILD} applied to the triple set
$\mathcal{R}\coloneqq\bigcup_{i=1}^k r(T_i)$ produce the same output for
all instances. If $\mathcal{R}$ is consistent, \texttt{BuildST} computed
the tree $\build(\mathcal{R},L)$. If all input trees have the same same
leaf set $L$ \texttt{BuildST} in particular computes their common
refinement. The performance analysis in \cite{Deng:18} shows that
\texttt{BuildST} runs in $O(k|L| \log^2(k|L|))$ time for this special case.
Linear-time algorithms for the special case of a common leaf set
therefore offer a further improvement over the best known general purpose
supertree algorithms.

\subsection*{A Bottom-Up Linear Time Algorithm}

The basic idea of our approach is to construct $T$ by means of a
simple bottom-up approach that computes the parent function
$\parent_T:V(T)\setminus\{\rho_T\}\to V(T)\setminus L$ of a candidate
tree $T$ in a stepwise manner. This algorithm is based on three simple
observations:
\begin{itemize}
  \item[(i)] If it exists, the common refinement $T$ of $T_1$, $T_2$, \dots,
  $T_k$ is uniquely defined by virtue of $\HH(T)=\bigcup_{i=1}^k \HH(T_i)$
  (cf.\ Lemma \ref{lem:leastresolved}). We will therefore identify all
  vertices $v_i\in V(T_i)$ with a vertex $v$ in the prospective tree $T$
  whenever their cluster, i.e., the sets $L(T_i(v_i))$ are the same. In
  this case, we have $L(T(v))\coloneqq L(T_i(v_i))$.  From here on, we
  simply say, by a slight abuse of notation, that $v$ is also a vertex of
  $T_i$ and write $v\in V(T_i)$.
  \item[(ii)] Since $\HH(T)=\bigcup_{i=1}^k \HH(T_i)$, each vertex
  $v\in V(T)$ is also a vertex in at least one input tree $T_i$.
  Conversely, every vertex $v\in V(T_i)$, $i\in\{1,\dots,k\}$, is a vertex
  in $T$. Therefore, we have $V(T)=\bigcup_{i=1}^k V(T_i)$.
  \item[(iii)] $T$ exists if and only if the sets $L(T(x))$ and $L(T(y))$ for
  all $x,y\in \bigcup_{i=1}^k V(T_k)$ are either comparable by set
  inclusion or disjoint, i.e.,
  $L(T(x))\cap L(T(y))\in\{L(T(x)),L(T(y)),\emptyset\}$.  Thus,
  $x \prec_T y$ if and only if
  $L(T_i(x))=L(T(x)) \subsetneq L(T_j(y))=L(T(y))$ for the appropriate
  choices of $i,j\in\{1,\dots, k\}$.
\end{itemize} Observation (iii) makes it possible to access the ancestor
order $\prec_T$ on $V(T)$ without knowing the common refinement $T$
explicitly.  Many of the upcoming definitions are illustrated in
Fig.~\ref{fig:exmpl-defs}.

\begin{figure}[t]
  \begin{center}
    \includegraphics[width=0.85\textwidth]{./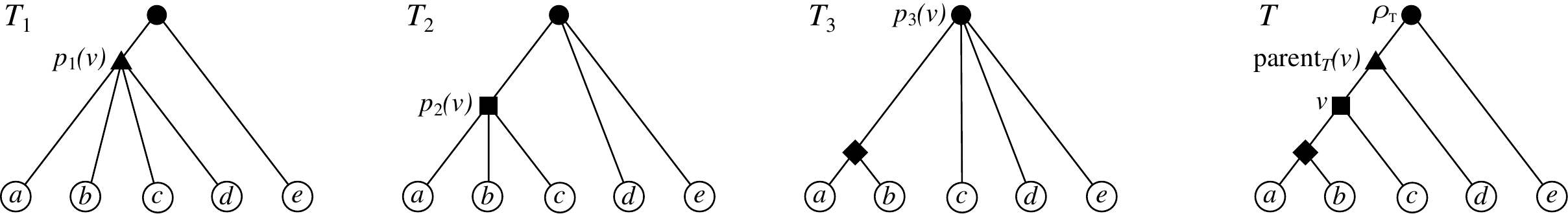}
  \end{center}
  \caption{The three trees $T_1$, $T_2$, and $T_3$ with common leaf set
    $L=\{a,b,c,d,e\}$ have the (unique) common refinement $T$. Here,
    $J(\rho)=\{1,2,3\}$ and thus, $\bar J(\rho) = \emptyset$.  The
    different symbols for vertices indicate which vertex $u$ in the $T_i$s
    corresponds to which vertex $u$ in $T$. Consider the vertex $v$
    highlighted as $\blacksquare$.  The corresponding vertices $p_i(v)$ are
    shown in the respective trees $T_i$.  Here, $p_2(v)=v$ while the
    vertices $p_1(v)$ and $p_3(v)$ in $T_1$ and $T_3$ correspond to
    $\parent_T(v)$ and $\rho$, respectively. Consequently, $J(v) = \{2\}$
    and $\bar J(v)=\{1,3\}$. We have
    $p_2(v)=v\prec_T\parent_T(v) = p_1(v)\prec_T p_3(b)$, according to
    Obs.\ \ref{fact:dicho}. In this example, only the last case in Obs.\
    \ref{obs:parent} for $v$ is satisfied, namely $\parent_T(v)=p_1(v)$.
    Moreover,
    $A(v) = \{v\} \cup \{\parent_{T_2}(v)=\rho\} \cup \{p_1(v),p_3(v)\} =
    \{v,\rho, \parent_T(v)\}$.   }
  \label{fig:exmpl-defs}
\end{figure}

We introduce, for each $v\in V(T)$, the index set
$J(v)=\{i\mid L(T_i(v))=L(T(v))\}$ of the trees that contain vertex $v$. We
have $J(v)\neq \emptyset$ for all $v\in V(T)$. For simplicity, we write
$\bar J(v)\coloneqq \{1,\dots,k\}\setminus J(v)$ for the indices of all
other trees. Hence, $\bar J(v)=\emptyset$ if and only if
$L(T(v))\in \HH(T_i)$ for all $i\in \{1,\dots,k\}$. In particular,
therefore, $\bar J(v)=\emptyset$ whenever $v\in L$ or $v=\rho$.

\emph{Let us assume until further notice that a common refinement exists
  and let $T=(V,E)$ be the unique least resolved common refinement of
  $T_1$, $T_2$, \dots, $T_k$ on a common leaf set.} Due to Lemma
\ref{lem:leastresolved}, $T$ is uniquely determined by the parent function
$\parent_T$.  The key ingredient in our construction are the following
vertices in $T_i$:
\begin{equation}
  \label{eq:pi}
  p_i(v)\coloneqq \lca_{T_i}(L(T(v)),\qquad i\in\{1,\dots,k\},\ v\in V(T)
\end{equation}
By assumption, we have $L(T(v))\subseteq L(T_i)$ and thus $p_i(v)$ is
well-defined.  As immediate consequence of the definition in
Eq.(\ref{eq:pi}), we have
\begin{fact}
  \label{fact:dicho}
  For all $v\in V(T)$ and all $i\in\{1,\dots,k\}$ it holds that $p_i(v)=v$
  iff $v\in V(T_i)$ iff $i\in J(v)$. If $i\notin J(v)$, then
  $v\prec_T p_{i}(v)$ and therefore $\parent_T(v)\preceq_T p_i(v)$.
\end{fact}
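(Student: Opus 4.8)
The plan is to unwind the definitions of $p_i(v)$ and $J(v)$ and to repeatedly translate between the ancestor order and inclusion of clusters. Throughout I will use the fact recalled in the preliminaries that, for any $A\subseteq L(T_i)$, the cluster $L(T_i(\lca_{T_i}(A)))$ is the inclusion-minimal element of $\HH(T_i)$ containing $A$, together with $\HH(T_i)\subseteq\HH(T)$, which holds because the common refinement $T$ is a refinement of each $T_i$. I will also keep in mind the identification convention of Observation (i): writing $w\in V(T_i)$ for a vertex $w\in V(T)$ means that some vertex of $T_i$ has cluster $L(T(w))$, and in that case $L(T_i(w))=L(T(w))$.

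First I would establish the chain $p_i(v)=v \iff v\in V(T_i) \iff i\in J(v)$. The second equivalence is essentially definitional: $v\in V(T_i)$ means $L(T(v))\in\HH(T_i)$, which is precisely the condition $L(T_i(v))=L(T(v))$ defining $i\in J(v)$. For the first equivalence I set $A\coloneqq L(T(v))$, so that $p_i(v)=\lca_{T_i}(A)$ and hence $L(T_i(p_i(v)))$ is the minimal cluster of $T_i$ containing $L(T(v))$. If $i\in J(v)$, then $L(T(v))\in\HH(T_i)$ is itself this minimal cluster, so $p_i(v)$ is the unique vertex of $T_i$ with cluster $L(T(v))$, which under the identification is $v$. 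Conversely, if $p_i(v)=v$, then $L(T_i(p_i(v)))=L(T(v))$ under the identification, so the minimal cluster of $T_i$ containing $L(T(v))$ equals $L(T(v))$; thus $L(T(v))\in\HH(T_i)$ and $i\in J(v)$.

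Next I would treat the case $i\notin J(v)$. By the equivalence just proved, $p_i(v)\ne v$, while $L(T_i(p_i(v)))$ is still the minimal cluster of $T_i$ containing $L(T(v))$; since $L(T(v))\notin\HH(T_i)$ this inclusion is proper, i.e.\ $L(T(v))\subsetneq L(T_i(p_i(v)))$. Because $p_i(v)\in V(T_i)\subseteq V(T)$ with $L(T(p_i(v)))=L(T_i(p_i(v)))$, Observation (iii) converts the proper cluster inclusion $L(T(v))\subsetneq L(T(p_i(v)))$ into the strict ancestor relation $v\prec_T p_i(v)$. Finally, $\parent_T(v)$ is the $\preceq_T$-minimal vertex strictly above $v$, so any vertex strictly above $v$ lies at or above it; applying this to $p_i(v)$ gives $\parent_T(v)\preceq_T p_i(v)$.

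I do not expect a genuine obstacle here, as the statement is a direct consequence of the lca--cluster dictionary. The only point requiring care is the bookkeeping around the abuse of notation that identifies a vertex of $T_i$ with the vertex of $T$ sharing its cluster, so that expressions such as $p_i(v)=v$ and $L(T(p_i(v)))=L(T_i(p_i(v)))$ are interpreted consistently.
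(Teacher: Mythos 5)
Your proof is correct and follows the same reasoning the paper intends: Observation~\ref{fact:dicho} is stated there without proof as an ``immediate consequence'' of the definition of $p_i$ in Eq.~(\ref{eq:pi}), and your write-up simply fills in the routine details (the lca/inclusion-minimal-cluster correspondence, the identification convention, and Observation~(iii) to pass from proper cluster inclusion to $\prec_T$). No gaps; the only implicit point worth noting is that $i\notin J(v)$ forces $v\ne\rho$, so $\parent_T(v)$ indeed exists in the final step.
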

Now assume that $\parent_T(v)$ exists in $T$, i.e., $v\neq \rho$.  By
Observation (ii), $v\in V(T)$ implies $v\in V(T_i)$ for some
$i\in\{1,\dots,k\}$. In this case, $\parent_T(v)$ must be the unique
$\preceq_{T_i}$-minimal vertex $u_i\in V(T_i)$ that satisfies
$L(T(v))\subsetneq L(T_i(u_i))$ because $\HH(T_i)\subseteq \HH(T)$.  In
other words, $p_i(\parent_T(v)) = u_i = \parent_{T_i}(v)$.  Hence, we have
\begin{fact}\label{obs:parent}
  For all $v\in V\setminus\{\rho\}$ it holds that
  $\parent_T(v)=\parent_{T_i}(v)$ for some $i\in J(v)$ or
  $\parent_T(v)=p_{j}(v)$ for some $j\in \bar J(v)$.
\end{fact}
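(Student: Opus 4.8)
The plan is to single out one tree that ``witnesses'' the parent $w\coloneqq\parent_T(v)$ and then split into two cases according to whether this tree also contains $v$. Concretely, since $w\in V(T)$, Observation~(ii) supplies an index $m$ with $w\in V(T_m)$, i.e.\ $m\in J(w)$ and $L(T_m(w))=L(T(w))$. The backbone of the argument is an order-correspondence: because $\HH(T_m)\subseteq\HH(T)$, the tree $T$ refines $T_m$, so for any $x,y\in V(T_m)$ one has $x\prec_{T_m}y$ iff $L(T(x))\subsetneq L(T(y))$ iff $x\prec_T y$. Thus $\prec_{T_m}$ is exactly the restriction of $\prec_T$ to $V(T_m)$. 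I would establish this correspondence first, as both cases rest on it; here I use that $L(T_m(x))=L(T(x))$ for every $x\in V(T_m)$ by definition of $J$.

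First case: $m\in J(v)$. Then $v,w\in V(T_m)$, and $L(T(v))\subsetneq L(T(w))$ gives $v\prec_{T_m}w$ via the correspondence. Since $w=\parent_T(v)$ is the immediate $\prec_T$-parent, no vertex of $V(T)$, and in particular none of $V(T_m)$, lies strictly $\prec_T$-between $v$ and $w$; transporting this back through the correspondence, no vertex of $T_m$ lies strictly $\prec_{T_m}$-between them either, whence $w=\parent_{T_m}(v)$. This yields the first alternative with $i=m\in J(v)$.

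Second case: $m\in\bar J(v)$. Put $u\coloneqq p_m(v)=\lca_{T_m}(L(T(v)))$, the inclusion-minimal cluster of $T_m$ containing $L(T(v))$. Since $L(T(v))\subseteq L(T_m(w))$ and $w\in V(T_m)$, minimality gives $u\preceq_{T_m}w$. I would then exclude $u\prec_{T_m}w$: because $v\notin V(T_m)$, the set $L(T(v))$ is itself not a cluster of $T_m$, so the containment $L(T(v))\subsetneq L(T_m(u))=L(T(u))$ is strict and hence $v\prec_T u$; coupled with $u\prec_T w$ (again from the correspondence), this would place $u$ strictly $\prec_T$-between $v$ and $w$, contradicting $w=\parent_T(v)$. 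Therefore $u=w$, giving the second alternative with $j=m\in\bar J(v)$.

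The main obstacle is conceptual rather than computational: one must carefully distinguish the ancestor orders of the various input trees and from $T$, and invoke the refinement relation correctly. The crux in both cases is the ``no intermediate vertex'' property of the immediate parent $w=\parent_T(v)$ in $T$, carried over to $T_m$ through the order-correspondence; the second case additionally hinges on the strictness $L(T(v))\subsetneq L(T_m(u))$, which is exactly where the hypothesis $v\notin V(T_m)$ (that is, $m\in\bar J(v)$) is used. Fact~\ref{fact:dicho} conveniently records the auxiliary equivalences $p_m(v)=v\iff m\in J(v)$ and $v\prec_T p_m(v)$ for $m\notin J(v)$, which shorten these verifications.
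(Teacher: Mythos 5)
Your proof is correct and follows essentially the same route as the paper: the text preceding the observation likewise identifies, via Observation~(ii), an input tree containing $\parent_T(v)$ and uses the refinement relation $\HH(T_i)\subseteq\HH(T)$ together with $\preceq_{T_i}$-minimality to conclude that the parent is either $\parent_{T_i}(v)$ (when that tree also contains $v$) or $p_j(v)$ (when it does not). Your write-up merely makes the case split and the order-correspondence between $\prec_{T_m}$ and $\prec_T$ more explicit than the paper does.
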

Note that in general also both cases in Obs.~\ref{obs:parent} are possible.
Consider the set of vertices
$A(v)\coloneqq\{v\}\cup\{\parent_{T_i}(v)\mid i\in J(v)\}\cup\{p_i(v)\mid
i\in\bar J(v)\}$. By construction and Obs.\ \ref{obs:parent}, we have
$v\preceq_T x$ for all $x\in A(v)$.  Since all ancestors of a vertex in a
tree are mutually comparable w.r.t.\ the ancestor order, we have
\begin{fact}
  \label{fact:comp}
  All $x,y\in A(v)$ are pairwise comparable w.r.t.\ $\preceq_T$.
\end{fact}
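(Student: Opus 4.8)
The plan is to reduce the statement to the elementary fact that, in any rooted tree, the ancestors of a fixed vertex form a chain with respect to the ancestor order. Concretely, I would first confirm that every element of $A(v)$ is an ancestor of $v$, i.e.\ satisfies $v\preceq_T x$ — this is exactly what the text preceding the claim already records ``by construction and Obs.~\ref{obs:parent}'' — and then observe that all such vertices lie on the unique path from $v$ to the root $\rho$, whence any two of them are $\preceq_T$-comparable.

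To make the first half self-contained I would verify $v\preceq_T x$ separately for the three types of generators of $A(v)$. For $x=v$ this is trivial. For $x=\parent_{T_i}(v)$ with $i\in J(v)$, the definition of $J(v)$ gives $L(T(v))=L(T_i(v))$, and since $\parent_{T_i}(v)$ is the parent of $v$ in $T_i$ we have $L(T_i(v))\subsetneq L(T_i(\parent_{T_i}(v)))$; because $\HH(T_i)\subseteq\HH(T)$, Observation~(iii) then yields $v\prec_T\parent_{T_i}(v)$. For $x=p_j(v)$ with $j\in\bar J(v)$, i.e.\ $j\notin J(v)$, Observation~\ref{fact:dicho} directly gives $v\prec_T p_j(v)$. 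Hence $v\preceq_T x$ for all $x\in A(v)$.

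With every $x\in A(v)$ an ancestor of $v$, I would finish by invoking the chain property: if $v\preceq_T x$ and $v\preceq_T y$, then both $x$ and $y$ occur on the unique path connecting $v$ to $\rho$, and any two vertices on a single root path are comparable under $\preceq_T$; thus $x$ and $y$ are comparable, proving the claim. I do not expect a genuine obstacle here, since the essential content is just the standard fact that ancestors lie on a common path to the root. The only point that requires any care is the verification that each generator of $A(v)$ is indeed an ancestor of $v$, which rests on combining the definition of $J(v)$ and Observation~(iii) with Observation~\ref{fact:dicho}; everything beyond that is routine.
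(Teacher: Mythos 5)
Your proof is correct and follows the same route as the paper, which justifies the observation in the sentence immediately preceding it: every element of $A(v)$ is an ancestor of $v$ (by construction and Obs.~\ref{obs:parent}/\ref{fact:dicho}), and ancestors of a fixed vertex in a rooted tree form a chain under $\preceq_T$. Your case-by-case verification that each generator of $A(v)$ satisfies $v\preceq_T x$ merely makes explicit what the paper compresses into ``by construction.''
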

Taken together, Observations~\ref{fact:dicho}-\ref{fact:comp} imply that
the parent map of $T$ can be expressed in the following form:
\begin{equation}
  \label{eq:mini}
  \parent_T(v) =
  \min\left( \min_{i\in J(v)} \parent_{T_i}(v),\ \min_{i\in\bar J(v)}
  p_i(v) \right)
\end{equation}
where the minimum is taken w.r.t.\ the ancestor order $\preceq_T$ on $T$.
Since the root $\rho_i$ of each $T_i$ coincides with the root $\rho$ of
$T$, $v$ is the root of $T$ iff $\parent_{T_i}(v)=\varnothing$ is undefined
for one and thus for all $i$. In this case, we set
$\parent_T(v)=\varnothing$.

With this in hand, we show how to compute the maps $p_i$ for
$u\coloneqq \parent_T(v)$ for all $i\in \{1,\dots,k\}$. To this end, we
distinguish three cases. (1) If $u\in V(T_i)$, we have $p_i(u)=u$ by
definition. (2) If $u\notin V(T_i)$, then we have to identify the
$\preceq_T$-minimal vertex $w\in V(T_i)$ with $u\prec_T w$.  If
$v\in V(T_i)$, then $p_i(u)=w=\parent_{T_i}(v)$. In the remaining case,
$i\in \bar J(v)$, we already know that $p_i(v)$ is the
$\preceq_{T_i}$-minimal ancestor of $v$. Thus, we have either
$p_i(v)=u=\parent_T(v)$, i.e., a sub-case of (1), or (3)
$u\preceq_T p_i(v)$ whenever $v\notin V(T_i)$ and $u\notin V(T_i)$. In this
case, the definition of $p_i$ implies $p_i(u)=p_i(v)$. Summarizing the
three cases yields the following recursion:
\begin{equation}
  \label{eq:rec-pi}
  p_i(u) = \begin{cases}
    u                 & \text{ if } i\in J(u) \\
    \parent_{T_i}(v)  & \text{ if } i\in J(v) \\
    p_i(v)            & \text{ if } i\in \bar J(u)
    \text{ and } i \in\bar J(v)
  \end{cases}
\end{equation}
Note, although the cases in Eq.\ \eqref{eq:rec-pi} are not exclusive (since
$J(v)\cap J(u)\neq \emptyset$ is possible), they are not in conflict. To
see this, observe that if $i\in J(u)$ and $i\in J(v)$, then
$u=\parent_{T_i}(v)$ as a consequence of the definition of
$u$.

Initializing $i\in J(v)$ for all $i$ and all leaves $v$, we can
compute $J(u)$ for $u=\parent_T(v)$ as a by-product by the minimum
computation in Eq.(\ref{eq:mini}) by simply keeping track of the equalities
encountered since both $\parent_{T_i}(v)$ and $p_i(v)$ are vertices in
$T_i$. More precisely, each time a strictly $\preceq_T$-smaller vertex
$u'$, i.e., a proper set inclusion, is encountered in Eq.(\ref{eq:mini}),
the current list of equalities is discarded and re-initialized as $\{i\}$,
where $i$ is the index of the tree $T_i$ in which the new minimum $u'$ was
found. The indices of the trees $T_j$ with $u'\in V(T_j)$ are then
appended.

It remains to ensure that the vertices are processed in the correct order.
To this end, we use a queue $\mathcal{Q}$, which is initialized by
enqueueing the leaf set. Upon dequeueing $v$, its parent $u$ and the values
$p_i(u)$ are computed. Except for the leafs, every vertex $u\in V(T)$
appears as parent of some $v\in V(T)$. On the other hand, $u$ may appear
multiple times as parent. Thus we enqueue $u$ in $\mathcal{Q}$ only if the
same vertex has not been enqueued already in a previous step. We emphasize
that it is not sufficient to check whether $u\in\mathcal{Q}$ since $u$ may
have already been dequeued from $\mathcal{Q}$ before re-appearance as a
parent. We therefore keep track of all vertices that have ever been
enqueued in a set $V$. To see that this is indeed necessary, consider a
tree $T_i=(a,(b,c)v_1)v_2$ and an initial queue
$\mathcal{Q}=(a,b,c)$. Without the auxiliary set $V$, we obtain
$\mathcal{Q}=(b,c,v_2)$, $\mathcal{Q}=(c,v_2,v_1)$,
$\mathcal{Q}=(v_2,v_1)$, $\mathcal{Q}=(v_1)$, $\mathcal{Q}=(v_2)$,
\textit{etc.}, and thus $v_2$ is enqueued twice. 

An implementation of this procedure also needs to keep track of the
correspondence between vertices in $V(T)$ and the vertices of $V(T_i)$. To
this end, we can associate with each $v\in V(T)$ a list of pointers to
$v\in V(T_i)$ for $i\in J(v)$, and pointer from $v\in V(T_i)$ back to
$v\in V(T)$.  For the leaves, these are assigned upon
initialization. Afterwards, they are obtained for $u=\parent_{T}(v)$ as a
by-product of computing $J(u)$, since the pointers have to be set exactly
for the $i\in J(u)$.  In particular, whenever the pointer for $u$ found
$T_i$ has already been set, we know that $u\in V$.

Summarizing the discussion so far, we have shown:
\begin{proposition}
  Suppose the trees $T_1$, $T_2$, \dots, $T_k$ have a common refinement
  $T$. Then $\parent_{T}(v)$ is correctly computed by the recursions
  Eq.(\ref{eq:mini}) and Eq.(\ref{eq:rec-pi}).
\end{proposition}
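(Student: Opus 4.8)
The plan is to prove the proposition by bottom-up induction along the ancestor order $\preceq_T$, verifying Eq.~\eqref{eq:mini} and Eq.~\eqref{eq:rec-pi} in tandem. The induction hypothesis, applied to the vertex $v$ currently being processed, is that the index set $J(v)$ (hence $\bar J(v)$), the vertices $p_i(v)$ for all $i$, and the parents $\parent_{T_i}(v)$ for $i\in J(v)$ are already known correctly---either from initialization when $v$ is a leaf or from the step in which $v$ was produced as a parent. First I would dispatch the base case: for a leaf $v$ we have $L(T(v))=\{v\}$, so $p_i(v)=\lca_{T_i}(\{v\})=v$ and $\{v\}\in\HH(T_i)$ for every $i$, giving $J(v)=\{1,\dots,k\}$ and $\bar J(v)=\emptyset$, exactly as initialized.

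For the inductive step I would first confirm that Eq.~\eqref{eq:mini} returns $u:=\parent_T(v)$. The candidates are the elements of $A(v)\setminus\{v\}$. Each candidate is $\succeq_T u$: for $i\in\bar J(v)$ this is immediate from Observation~\ref{fact:dicho} ($u=\parent_T(v)\preceq_T p_i(v)$), and for $i\in J(v)$ the cluster $L(T_i(\parent_{T_i}(v)))\in\HH(T_i)\subseteq\HH(T)$ properly contains $L(T(v))$, forcing $u\preceq_T\parent_{T_i}(v)$. By Observation~\ref{obs:parent} at least one candidate equals $u$, and by Observation~\ref{fact:comp} all candidates are pairwise $\preceq_T$-comparable; hence their $\preceq_T$-minimum exists and coincides with $u$. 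Since by Observation~(iii) the order $\preceq_T$ is decided by set inclusion of the sets $L(T(\cdot))$, this minimum is effectively computable. The root is handled separately: $\parent_{T_i}(v)$ is undefined for one and therefore every $i$ exactly when $v=\rho$, in which case $\parent_T(v)=\varnothing$.

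Next I would verify Eq.~\eqref{eq:rec-pi} together with the by-product computation of $J(u)$. The three cases are exhaustive (every $i$ lies in $J(u)$, or in $J(v)\setminus J(u)$, or in $\bar J(u)\cap\bar J(v)$), and I would check they do not conflict on overlaps: if $i\in J(u)\cap J(v)$ then the defining property of $u=\parent_T(v)$ gives $u=\parent_{T_i}(v)$, so the first two cases agree. The substantive content is the pair of identities for $i\notin J(u)$. When $i\in J(v)$ (so $v\in V(T_i)$ but $u\notin V(T_i)$), I would show $p_i(u)=\parent_{T_i}(v)$; when $i\in\bar J(u)\cap\bar J(v)$, I would show $p_i(u)=p_i(v)$. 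For the by-product, I would prove that $i\in J(u)$ holds if and only if the candidate contributed by $T_i$ in Eq.~\eqref{eq:mini} equals $u$, which validates re-initializing the equality list whenever a strictly smaller minimum appears.

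I expect the main obstacle to be the set-theoretic bookkeeping behind the two identities for $p_i(u)$. The crucial auxiliary claim is that, since $u=\parent_T(v)$ is the \emph{immediate} $T$-parent of $v$ and every $T_i$-cluster is a $T$-cluster ($\HH(T_i)\subseteq\HH(T)$), there is no cluster in $\HH(T_i)$ lying strictly between $L(T(v))$ and $L(T(u))$; moreover $L(T(v))\notin\HH(T_i)$ whenever $v\notin V(T_i)$. From this, recalling that $p_i(w)=\lca_{T_i}(L(T(w)))$ is precisely the inclusion-minimal cluster of $\HH(T_i)$ containing $L(T(w))$, both identities follow: in the case $i\in J(v)$ the minimal $T_i$-cluster containing $L(T(u))$ is the $T_i$-parent cluster of $v$, i.e.\ $p_i(u)=\parent_{T_i}(v)$; and in the case $i\in\bar J(u)\cap\bar J(v)$ the chain $L(T(v))\subseteq L(T(u))\subseteq L(T_i(p_i(v)))$ forces the minimal $T_i$-clusters containing $L(T(v))$ and $L(T(u))$ to coincide, i.e.\ $p_i(u)=p_i(v)$. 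With this claim established, the remaining verifications are routine, and chaining the inductive step from the leaves up to the root completes the proof.
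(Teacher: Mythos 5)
Your proposal is correct and follows essentially the same route as the paper: the paper's ``proof'' is the preceding discussion (Observations~\ref{fact:dicho}--\ref{fact:comp} yielding Eq.~\eqref{eq:mini}, and the three-case analysis yielding Eq.~\eqref{eq:rec-pi}), which you merely repackage as an explicit bottom-up induction. Your auxiliary claim that no cluster of $\HH(T_i)\subseteq\HH(T)$ lies strictly between $L(T(v))$ and $L(T(\parent_T(v)))$ is exactly the (implicit) justification the paper uses for the second and third cases of Eq.~\eqref{eq:rec-pi}, so nothing is missing.
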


Next we observe that it is not necessary to explicitly compute set
inclusions. As an immediate consequence of Obs.~\ref{fact:comp} and the
fact that $x\ne y$ implies $L(T(x))\ne L(T(y))$ because all trees are
phylogenetic by assumption, we obtain
\begin{fact}
  \label{fact:cardi}
  For any two $x,y\in A(v)$, we have $x\prec_T y$ if and only if
  $|L(T(x))|<|L(T(y))|$. 
\end{fact}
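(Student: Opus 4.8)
The plan is to prove Observation~\ref{fact:cardi}, which states that for $x,y \in A(v)$, we have $x \prec_T y$ if and only if $|L(T(x))| < |L(T(y))|$. Since this is asserted to follow immediately from Obs.~\ref{fact:comp} together with the phylogenetic property, I expect the proof to be short; the main task is to assemble the two cited facts correctly rather than to introduce any new machinery.

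Let me think about the forward direction first. Suppose $x \prec_T y$. By the definition of the ancestor order and the cluster bijection, $x \prec_T y$ means $L(T(x)) \subsetneq L(T(y))$, a proper subset inclusion. A proper inclusion of finite sets immediately gives $|L(T(x))| < |L(T(y))|$. This direction needs neither Obs.~\ref{fact:comp} nor phylogenetic-ness — it is just monotonicity of cardinality under strict inclusion.

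Now the converse, which is where the two cited facts do the work. Assume $|L(T(x))| < |L(T(y))|$ for $x,y \in A(v)$. By Obs.~\ref{fact:comp}, $x$ and $y$ are comparable with respect to $\preceq_T$, so exactly one of $x \prec_T y$, $y \prec_T x$, or $x = y$ holds. If $x = y$ then $|L(T(x))| = |L(T(y))|$, contradicting the strict cardinality inequality. If $y \prec_T x$, then the forward direction (applied with the roles swapped) would give $|L(T(y))| < |L(T(x))|$, again contradicting our assumption. Hence the only remaining possibility is $x \prec_T y$, as desired.

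The role of the phylogenetic assumption is precisely to justify eliminating the equality case via contrapositive: distinct vertices have distinct clusters, i.e. $x \ne y \Rightarrow L(T(x)) \ne L(T(y))$, so equal cardinalities (together with comparability) force $x = y$. This is what lets us conclude from $|L(T(x))| < |L(T(y))|$ that $x$ and $y$ are distinct and incomparable-equality is ruled out. The only potential subtlety — and hence the main thing to watch — is making sure the equivalence ``$x \prec_T y \iff L(T(x)) \subsetneq L(T(y))$'' is legitimately invoked; this is exactly Observation~(iii) in the bottom-up setup, so no extra argument is required. I do not anticipate any real obstacle here, as both needed ingredients are already in hand.
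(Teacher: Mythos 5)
Your proof is correct and follows exactly the route the paper intends: the paper states Obs.~\ref{fact:cardi} as an ``immediate consequence'' of Obs.~\ref{fact:comp} (pairwise comparability on $A(v)$) together with the injectivity of $v\mapsto L(T(v))$ for phylogenetic trees, which is precisely the trichotomy argument you spell out. No discrepancies.
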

Thus it suffices to evaluate the minimum in Eq.(\ref{eq:mini}) w.r.t.\ to
the cardinalities $|L(T(v))|$. This can be achieved in $O(k)$ time provided
the values $\ell_i(v)\coloneqq |L(T_i(v))|$ are known for the input
trees. Since the parent-function $\parent_T$ unambiguously defines a tree
$T$, we have
\begin{corollary}
  Suppose the trees $T_1$, $T_2$, \dots, $T_k$ have a common refinement
  $T$. Then $T$ can be computed in $O(k|L|)$ time.
  \label{cor:linear-if-T-exists}
\end{corollary}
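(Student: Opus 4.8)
The plan is to verify that the bottom-up procedure described above, whose correctness is already guaranteed by the Proposition, admits an implementation meeting the stated bound. Since $\parent_T$ unambiguously determines $T$, it suffices to bound the time to compute the parent function; the argument is essentially a running-time accounting, split into a preprocessing phase and the main loop.

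First I would precompute the cardinalities $\ell_i(v)=|L(T_i(v))|$ for every vertex $v$ of every input tree $T_i$. A single post-order traversal of each $T_i$ achieves this, assigning $\ell_i(v)=1$ to leaves and $\ell_i(v)=\sum_{w\in\child_{T_i}(v)}\ell_i(w)$ to inner vertices. Since every phylogenetic tree on the leaf set $L$ has at most $2|L|-1$ vertices, each traversal costs $O(|L|)$ and the whole preprocessing costs $O(k|L|)$. During initialization I would also enqueue the leaves in $\mathcal{Q}$, mark them in the auxiliary set $V$, and set the leaf pointers between $v\in V(T_i)$ and the corresponding $v\in V(T)$.

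Second, I would bound the cost of processing a single dequeued vertex $v$ by $O(k)$. Evaluating the minimum in Eq.(\ref{eq:mini}) amounts to comparing the $k$ candidate ancestors $\parent_{T_i}(v)$, $i\in J(v)$, and $p_i(v)$, $i\in\bar J(v)$, with respect to $\preceq_T$. Each candidate lies in $A(v)$ and is a vertex of some $T_i$, so by Observation~\ref{fact:cardi} the comparison reduces to comparing the precomputed cardinalities $\ell_i$ of these vertices; in particular no cluster $L(T(v))$ ever has to be materialized. The vertex $u=\parent_T(v)$ is thus identified in a single $O(k)$-time scan, during which I simultaneously assemble $J(u)$, update the values $p_i(u)$ via the recursion Eq.(\ref{eq:rec-pi}), and set the pointers between $u$ and its occurrences in the $T_i$. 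Maintaining membership in $V$, the index sets, and the pointers in $O(1)$ time per operation (e.g.\ via flags attached to the vertex objects) keeps the per-vertex cost at $O(k)$.

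Finally, I would count the iterations of the main loop. The auxiliary set $V$ guarantees that each vertex is enqueued, and hence dequeued, exactly once; by Observation (ii) the processed vertices are exactly $V(T)=\bigcup_{i=1}^k V(T_i)$, and since $T$ is phylogenetic on leaf set $L$ we have $|V(T)|\le 2|L|-1=O(|L|)$. Multiplying the $O(|L|)$ iterations by the $O(k)$ per-iteration cost and adding the $O(k|L|)$ preprocessing yields the claimed $O(k|L|)$ bound. The step I expect to require the most care is the per-vertex analysis: one must check that replacing set-inclusion tests by cardinality comparisons (Observation~\ref{fact:cardi}) is legitimate for all candidates at once, and that the queue-plus-set bookkeeping genuinely prevents any vertex from being reprocessed, since otherwise the vertex count---and with it the running time---could blow up.
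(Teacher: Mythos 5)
Your proposal is correct and follows essentially the same route as the paper's own proof: precompute the cardinalities $\ell_i(v)$ in $O(k|L|)$, use Obs.~\ref{fact:cardi} to reduce the ancestor comparisons in Eq.(\ref{eq:mini}) to $O(k)$ constant-time cardinality comparisons per dequeued vertex, and bound the number of processed vertices by $O(|L|)$ via the set $V$ and the pointer bookkeeping. The accounting matches the paper's in every essential respect.
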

\begin{proof}
  For each input tree $T_i$, $\ell_i(v)$ can be computed as
  \begin{equation}
    \label{eq:ell-i}
    \ell_i(v)=
    \begin{cases}
      1              & \text{if } v\in L, \text{ and} \\
      \ell_i(v)=\displaystyle\sum_{u\in\child_{T_i}(v)}\ell_i(u) 
      & \text{otherwise.}
    \end{cases}
  \end{equation}
  Since the total number of terms appearing for the inner vertices of $T$
  equals the number of edges of $T_i$, the total effort for $T_i$ is
  bounded by $O(|L|)$. The total number of vertices $u$ computed as
  $\parent_{T}(v)$ equals the number of edges of $T$, and thus is also
  bounded by $O(L)$.  Since the tree $T$, as well as the $k$ trees $T_i$,
  have $O(|L|)$ vertices, we require $O(k|L|)$ pointers from the vertices
  in $T$ to their corresponding vertices in the $T_i$ and \textit{vice
    versa}.  By initializing the pointers for all $v\in V(T_i)$ as ``not
  set'', it can be checked in constant time whether $u$ that was found in
  $T_i$ is already contained in the set $V$, since this is the case if and
  only if its pointer has already been set.  Evaluation of
  Eq.(\ref{eq:mini}) requires $O(k)$ comparisons, which, by virtue of
  Obs.~\ref{fact:cardi}, can be performed in constant time. The computation
  of $p_i(u)$ and $J(u)$ as well as the update of the correspondence table
  between vertices in $T$ and $T_i$, $1\le i\le k$ requires $O(k)$
  operations for each $v\in V(T)$. Thus $T$ can be computed in $O(k|L|)$
  time.
\end{proof}

So far, we have assumed that a common refinement exists.  By a slight abuse
of notation, we also use the function $\parent_T$ if the refinement $T$
does not exist. In this case, we define $\parent_T$ on the union of the
$V(T_i)$ recursively by Eqs.(\ref{eq:mini}) and (\ref{eq:rec-pi}).
Alg.~\ref{alg:refinement-k-trees} summarizes the procedure based on the
leaf set cardinalities for the general case.  If no common refinement $T$
exists, then either $\parent_T$ does not specify a tree, or the tree $T$
defined by $\parent_T$ is not a common refinement of $T_1$, $T_2$, \dots,
$T_k$. The following result shows that we can always efficiently compute
$\parent_T$ and check whether it specifies a common refinement of the input
trees.

\begin{algorithm}[t]
  \caption{\texttt{LinCR} Common refinement for $k$ trees on the same 
    leaf set.}
  \label{alg:refinement-k-trees}
  
  \newcommand{\treeline}[1]{\begingroup\color{blue}#1\endgroup}
  
  \DontPrintSemicolon
  \SetKwFunction{FRecurs}{void FnRecursive}%
  \SetKwFunction{FRecurs}{Edit}
  \SetKwProg{Fn}{Function}{}{}
  
  \KwIn{Trees $T_1, T_2, \dots, T_k$, $k\ge 2$, with
    $L\coloneqq L(T_1)=\dots=L(T_k)$ and $|L|\ge 2$.}  \KwOut{Common
    refinement $T$ of $T_1, \dots, T_k$ if it exists, \textbf{false}
    otherwise.}
  
  \BlankLine
  compute $\ell_i(v)$ for all $v\in T_i$ and all $T_i$ according to 
  Eq.~(\ref{eq:ell-i}) \label{line:ell-i} \;
  initialize empty queue $Q$ and a set $V\leftarrow L$\;
  \BlankLine
  \ForEach{$v\in L$}{
    \textbf{enqueue} $v$ to $Q$\;
    $J(v)=\{1,\dots,k\}$;
    $p_i(v)\leftarrow v$ for all $i\in\{1,\dots k\}$;
    $\ell(v)=1$\;
  }
  
  \BlankLine
  
  \While{$Q$ is not empty}{
    $v\leftarrow$ \textbf{dequeue} first element from $Q$\;
    $u\leftarrow \varnothing$;
    $\ell_{\min}\leftarrow |L|$;
    $J_u\leftarrow \varnothing$\;
    \ForEach{$i\in \{1,\dots,k\}$ \label{line:min-start}}{
      \lIf{$i\in J(v)$}{
        $u'\leftarrow \parent_{T_i}(v)$ }
      \lElse{$u'\leftarrow p_i(v)$}
      \uIf{$u=\varnothing$ or $\ell_i(u')<\ell_{\min}$}{
        $u\leftarrow u'$;
        $\ell_{\min}\leftarrow \ell_i(u')$;
        $J_u\leftarrow \{i\}$\;
      }
      \lElseIf{$\ell_i(u')=\ell_{\min}$}{
        $J_u\leftarrow J_u\cup\{i\}$ \label{line:min-end}
      }
    }
    \lIf{$\ell(v)<\ell_{\min}$ \label{line:v-lt-u-check} }{
      $\parent_{T}(v)\leftarrow u$
    }
    \lElse 
    {
      \Return \textbf{false} \label{line:exit-v-ge-u}
    }
    
    \If{$u\notin V$ and $\ell_{\min}< |L|$ \label{line:enqueue-u}}{
      \textbf{enqueue} $u$ to $Q$ and add $u$ to $V$\;
      \lIf{$|V|>2|L|-2$}{
        \Return \textbf{false} \label{line:exit-too-many}
      }
      $\ell(u)=\ell_{\min}$;
      $J(u)\leftarrow J_u$\;
      \ForEach{$i\in\{1,\dots,k\}$}{
        \lIf{$i\in J(u)$}{
          $p_i(u)\leftarrow u$
        }
        \lElseIf{$i\in J(v)$}{
          $p_i(u)\leftarrow \parent_{T_i}(v)$
        }
        \lElse{
          $p_i(u)\leftarrow p_i(v)$ \label{line:pi-case-3} 
        }
      }
    }
  }
  $T\leftarrow$ the tree defined by the map $\parent_T$ 
  \label{line:T-from-parent-T}\;
  \lIf{$T$ is not phylogenetic \label{line:T-phylogenetic} }{
    \Return \textbf{false}
  }
  \ForEach{$i\in \{1,\dots,k\}$}{
    initialize a copy $T'_i$ of $T$\;
    \ForEach{$v\in V(T'_i)$ such that $i\in \bar J(v)$}{
      contract the edge $\{\parent_{T'_i}(v), v\}$ in $T'_i$\;      
    }
    \ForEach{$v\in V(T'_i)$ \label{line:isomorphism-chech-start} }{
      \lIf{$\child_{T'_i}(v)\ne \child_{T_i}(v)$}{
        \Return \textbf{false} \label{line:isomorphism-chech-end} 
      }
    }
  }
  \Return $T$
\end{algorithm}

\begin{theorem}
  \texttt{LinCR} (Alg.~\ref{alg:refinement-k-trees}) decides in $O(k|L|)$
  time whether a common refinement of trees $T_1$, $T_2$, \dots, $T_k$ on
  the same leaf set $L$ exists and, in the affirmative case, returns the
  tree $T$ corresponding to
  $\HH(T)=\HH(T_1)\cup\HH(T_2)\cup\dots\cup\HH(T_k)$.
\end{theorem}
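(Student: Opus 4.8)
The plan is to establish the three assertions---correctness when a refinement exists, correctness when it does not, and the $O(k|L|)$ running time---by combining the results proved above with a soundness argument for the verification phase. For the positive case, suppose a common refinement exists. By Lemma~\ref{lem:leastresolved} it is the unique least resolved tree $T$ with $\HH(T)=\bigcup_{i=1}^k\HH(T_i)$, and by the Proposition the recursions Eq.(\ref{eq:mini}) and Eq.(\ref{eq:rec-pi}) compute $\parent_T$ correctly, where Obs.~\ref{fact:cardi} guarantees that the cardinality-based minimum taken in lines~\ref{line:min-start}--\ref{line:min-end} agrees with the $\preceq_T$-minimum. I would then verify that no guard fires: line~\ref{line:v-lt-u-check} holds because in a genuine tree every parent has a strictly larger cluster; the bound in line~\ref{line:exit-too-many} is respected because a phylogenetic tree on $|L|$ leaves has at most $2|L|-1$ vertices, of which the root is never enqueued; line~\ref{line:T-phylogenetic} passes since $T$ is phylogenetic; and the display loop passes because $T$, being a common refinement, displays each $T_i$. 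Hence the algorithm returns exactly $T$.

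For the negative case I would argue by soundness, proving the claim that \emph{whenever the algorithm returns a tree $T$ rather than} \textbf{false}\emph{, that tree is a common refinement}, so that non-existence forces the output \textbf{false}. The guards in lines~\ref{line:v-lt-u-check} and~\ref{line:exit-too-many} ensure that $\parent_T$ has strictly increasing cluster cardinalities and boundedly many vertices, so that line~\ref{line:T-from-parent-T} yields a genuine tree $T$. If $T$ is returned, then for every $i$ the contraction of all vertices $v$ with $i\in\bar J(v)$ turns the copy $T'_i$ into a tree whose children coincide with those of $T_i$ (lines~\ref{line:isomorphism-chech-start}--\ref{line:isomorphism-chech-end}); since contraction leaves the clusters of the surviving vertices unchanged, this gives $\HH(T'_i)=\HH(T_i)$ and $\HH(T'_i)\subseteq\HH(T)$, whence $\HH(T_i)\subseteq\HH(T)$ and $\bigcup_i\HH(T_i)\subseteq\HH(T)$. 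For the reverse inclusion I would note that the root has cluster $L\in\bigcup_i\HH(T_i)$, while every other vertex $u$ receives a nonempty index set when it is created (leaves start with $J=\{1,\dots,k\}$); choosing any $i\in J(u)$, the vertex $u$ survives the contraction producing $T'_i$, so by the passed children comparison its cluster $L(T(u))$ equals that of the corresponding vertex of $T_i$ and hence lies in $\HH(T_i)$. Thus $\HH(T)\subseteq\bigcup_i\HH(T_i)$, and combining the two inclusions yields $\HH(T)=\bigcup_i\HH(T_i)$; by Lemma~\ref{lem:leastresolved} this union is then a hierarchy and $T$ is the common refinement, which proves the claim.

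The running time I would obtain vertex by vertex. Computing the cardinalities $\ell_i$ costs $O(k|L|)$ as in the proof of Corollary~\ref{cor:linear-if-T-exists}. The set $V$ guarantees that each vertex is enqueued and dequeued at most once, and the guard in line~\ref{line:exit-too-many} caps the number of processed vertices at $2|L|-2=O(|L|)$ even when no refinement exists; each iteration performs $O(k)$ work in evaluating Eq.(\ref{eq:mini}) via cardinalities and in updating the maps $p_i$ and the index set through Eq.(\ref{eq:rec-pi}). Building $T$ from $\parent_T$ and the phylogenetic test are $O(|L|)$, and the final loop copies, contracts, and compares each $T'_i$ in $O(|L|)$, for $O(k|L|)$ in total.

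The step I expect to be the main obstacle is the soundness argument in the negative case, and within it the reverse inclusion $\HH(T)\subseteq\bigcup_i\HH(T_i)$: the display check directly delivers $\bigcup_i\HH(T_i)\subseteq\HH(T)$, but the opposite inclusion relies on the two structural facts that every vertex carries a nonempty index set and that edge contraction preserves the clusters of surviving vertices. The second delicate point is securing linear-time termination when no refinement exists, which hinges entirely on the vertex-count guard in line~\ref{line:exit-too-many}, since without it the recursion could in principle manufacture spurious vertices indefinitely.
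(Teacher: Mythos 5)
Your proposal is correct and follows essentially the same route as the paper: construct $\parent_T$ bottom-up with the cardinality and vertex-count guards, argue that these guards force $\parent_T$ to define a tree, verify the phylogenetic property and the display of each $T_i$ via contraction and children comparison, and charge $O(k)$ work per processed vertex. Your spelled-out argument for the reverse inclusion $\HH(T)\subseteq\bigcup_i\HH(T_i)$ (nonempty $J(u)$ plus the fact that contraction preserves the clusters of surviving vertices) is a slightly more explicit version of the paper's one-line justification, but it is the same idea.
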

\begin{proof}
  We construct $\parent_{T}$ in
  Lines~\ref{line:ell-i}--\ref{line:pi-case-3} as described in the proof of
  Cor.~\ref{cor:linear-if-T-exists}. In particular, we determine
  $u\coloneqq\parent_T(v)$ by virtue of the smallest $\ell_i(u)$.  Hence,
  we can process each enqueued vertex $v$ in $O(k)$.  Moreover, if a common
  refinement $T$ exists, then Cor.~\ref{cor:linear-if-T-exists} guarantees
  that we obtain this tree in Line~\ref{line:T-from-parent-T}.
  
  A tree on $|L|$ leaves has at most $|L|-1$ inner vertices with equality
  holding for binary trees.  Therefore, the set $V$ of distinct vertices
  encountered in Alg.~\ref{alg:refinement-k-trees}, can contain at most
  $2|L|-2$ vertices (note that by construction the root does not enter
  $V$).  If this condition is violated, no common refinement exists and we
  can terminate with a negative answer (cf.\
  Line~\ref{line:exit-too-many}). This ensures that $\parent_{T}$ is
  constructed in $O(k|L|)$ time.  We continue by showing that, unless the
  algorithm exits in Line~\ref{line:exit-v-ge-u}
  or~\ref{line:exit-too-many}, $\parent_{T}$ in
  Line~\ref{line:T-from-parent-T} always defines a tree $T$. To see this,
  consider the graph $G$ with vertex set $V\cup\{\rho\}$ where $\rho$ is
  the root vertex which is contained in each $T_i$ and an edge $\{u,v\}$ if
  and only if $\parent_{T}(v)=u$ or $\parent_{T}(u)=v$.  Checking whether
  $\ell(v)<\ell_{\min}(=\ell(u))$ in Line~\ref{line:v-lt-u-check} ensures
  that $G$ does not contain cycles and that $\parent_{T}(v)=u$ \emph{and}
  $\parent_{T}(u)=v$ is not possible.  Moreover, every vertex $v\in V$ is
  enqueued to $\mathcal{Q}$ and receives a parent $u$ such that
  $\ell(v)<\ell(u)$.  Unless $u=\rho$, $u$ in turn receives a parent $u'$
  with $\ell(u)<\ell(u')$.  Since $V$ is finite $v,u,u',...$ are pairwise
  distinct as a consequence of the cardinality condition, and we conclude
  that eventually $\rho$ is reached, i.e., a path to $\rho$ exists for all
  $v\in V$.  It follows that $G$ is connected, acyclic, and simple, and
  thus a tree (with root $\rho$).
  
  It remains to check whether $T$ is phylogenetic and displays $T_i$ for
  all $i\in\{1,\dots,k\}$.  Checking whether $T$ is phylogenetic in
  Line~\ref{line:T-phylogenetic} can be done in $O(|L|)$ in a top-down
  traversal that exits as soon as it encounters a vertex with a single
  child. To check whether $T$ displays a tree $T_i$, we contract (in a copy
  of $T$) in a top-down traversal all edges $uv$ with $v\in \child_T(u)$
  for which $u\notin V(T_i)$, i.e., for which $i\notin J(v)$.  Since the
  root of $T$ and leaves of $T$ are in $T_i$, this results in a rooted tree
  $T_i'$ with $V(T_i)=V(T_i')$ if $T$ is indeed the common refinement of
  all trees. The contraction of an edge $uv$ can be performed in
  $O(\child_T(v)|)$, hence in total time $O(|E(T_i)|)=O(|L|)$. Finally, we
  can check in $O(|L|)$ time whether the known correspondence between the
  vertices of $T_i$ and $T_i'$ is an isomorphism. To this end, it suffices
  to traverse $T_i$ and to check that $\child_{T_i}(v)=\child_{T'_i}(v)$
  for all $v\in V(T_i)$ (cf.\ Lines
  \ref{line:isomorphism-chech-start}--\ref{line:isomorphism-chech-end})
  using the pointers of $v$ and all elements in $\child_{T_i}(v)$ to the
  corresponding vertices in $T$. Note that, in general, the pointer from a
  vertex $v$ in $T_i$ to a vertex in $T'_i$ may not be set, in which case
  $v\notin V(T'_i)$ and thus, we can terminate with a negative answer.  The
  total effort thus is bounded by $O(k|L|)$.
  
  If $T$ on $L$ is a phylogenetic tree displaying all trees $T_1$, $T_2$,
  \dots, $T_k$, then it is a common refinement of these trees.  Since every
  vertex $v\in V(T)$ is also contained in some $T_i$, i.e.,
  $L(T(v))=L(T_i(v))$, we have
  $\HH(T)=\HH(T_1)\cup\HH(T_2)\cup\dots\cup\HH(T_k)$.
\end{proof}

\section*{Computational Results} 

We compare the running times for (\textsc{a}) \texttt{BUILD} \cite{Aho:81},
(\textsc{b}) \texttt{BuildST} \cite{Deng:18}, (\textsc{c})
\texttt{Merge_Trees} \cite{Jansson:16}, (\textsc{c'})
\texttt{Loose_Cons_Tree} \cite{Jansson:16}, and (\textsc{d}) \texttt{LinCR}
(Alg.~\ref{alg:refinement-k-trees}).  To this end, we implemented all of
these algorithms in Python as part of the \texttt{tralda} library.  We note
that \texttt{BUILD} operates on a set of triples extracted from the input
trees rather than the trees themselves.  We use the union of the minimum
cardinality sets of representative triples of every $T_i$ appearing in the
proof of Thm.~2.8 in \cite{Gruenewald:07a}.  Therefore, we have
$R\in O(k|L|^2)$ \cite[Thm.~6.4]{Seemann:18} and \texttt{BUILD} runs in
$O(k|L|^3)$ time.  In the case of \texttt{Merge_Trees}, we implemented a
variant that starts with $T=T_1$ and then iteratively merges the clusters
of the tress $T_i$, $2\le i\le k$, into $T$.  \texttt{Merge_Trees} assumes
that the input trees are compatible, which is guaranteed in our
benchmarking data set.  In practice, however, this condition may be
violated, in which case the behavior of \texttt{Merge_Trees} is undefined.
We therefore also implemented an $O(k|L|)$ algorithm for constructing the
\emph{loose consensus tree} for a set of trees $T_1$, $T_2$, \dots, $T_k$
on the same leaf set, \texttt{Loose_Cons_Tree}, following
\cite{Jansson:16}.  The loose consensus comprises all clusters that occur
in at least one tree $T_i$, $1\le i\le k$ and that are compatible with all
other clusters of the input trees (see \cite{Bremer:90, Day:03,Dong:11} and
the references therein).  The loose consensus tree by definition coincides
with the common refinement whenever the latter exists.
\texttt{Loose_Cons_Tree} uses \texttt{Merge_Trees} as a subroutine but
ensures compatibility in each step by first deleting incompatible clusters
in one of the trees. This is implemented as the deletion of the
corresponding inner vertex $v$ followed by reconnecting the children of $v$
to the parent of $v$. The input trees are compatible if and only if no
deletion is necessary. The existence of a common refinement can therefore
by checked by keeping track of the number of deletions.  However, the
subroutine that processes trees to remove incompatible clusters
significantly adds to the running time of the \texttt{Loose_Cons_Tree}
algorithm. The linear-time algorithms require $O(k|L|)$ space.

We simulate test instances as follows: First, a random tree $T^*$ is
generated recursively by starting from a single vertex (which becomes the
root) and stepwise attaching new leaves to a randomly chosen vertex $v$
until the desired number of leaves $|L|$ is reached.  In each step, we add
two children to $v$ if $v$ is currently a leaf, and only a single new leaf
otherwise.  This way, the number of leaves increases by exactly one in each
step and the resulting tree $T^*$ is phylogenetic (but in general not
binary).  From $T^*$, we obtain $k\in\{2,8,32\}$ trees $T_1$, $T_2$,\dots,
$T_k$ by random contraction of inner edges in (a copy of) $T^*$.  Each edge
is considered for contraction independently with a probability
$p\in\{0.1,0.5,0.9\}$. Therefore, $T^*$ is a refinement of $T_i$ for all
$1\le i\le k$, i.e., a common refinement exists by construction. However,
in general we have $\HH(T^*)\ne \bigcup_{i=1}^k \HH(T_i)$, i.e., $T^*$ is
not necessarily the minimal common refinement of the $T_i$.  The trees
$T_1$, $T_2$, \dots, $T_k$ constructed in this manner serve as input for
all algorithms.

\begin{figure}[t]
  \begin{center}
    \includegraphics[width=0.85\textwidth]{./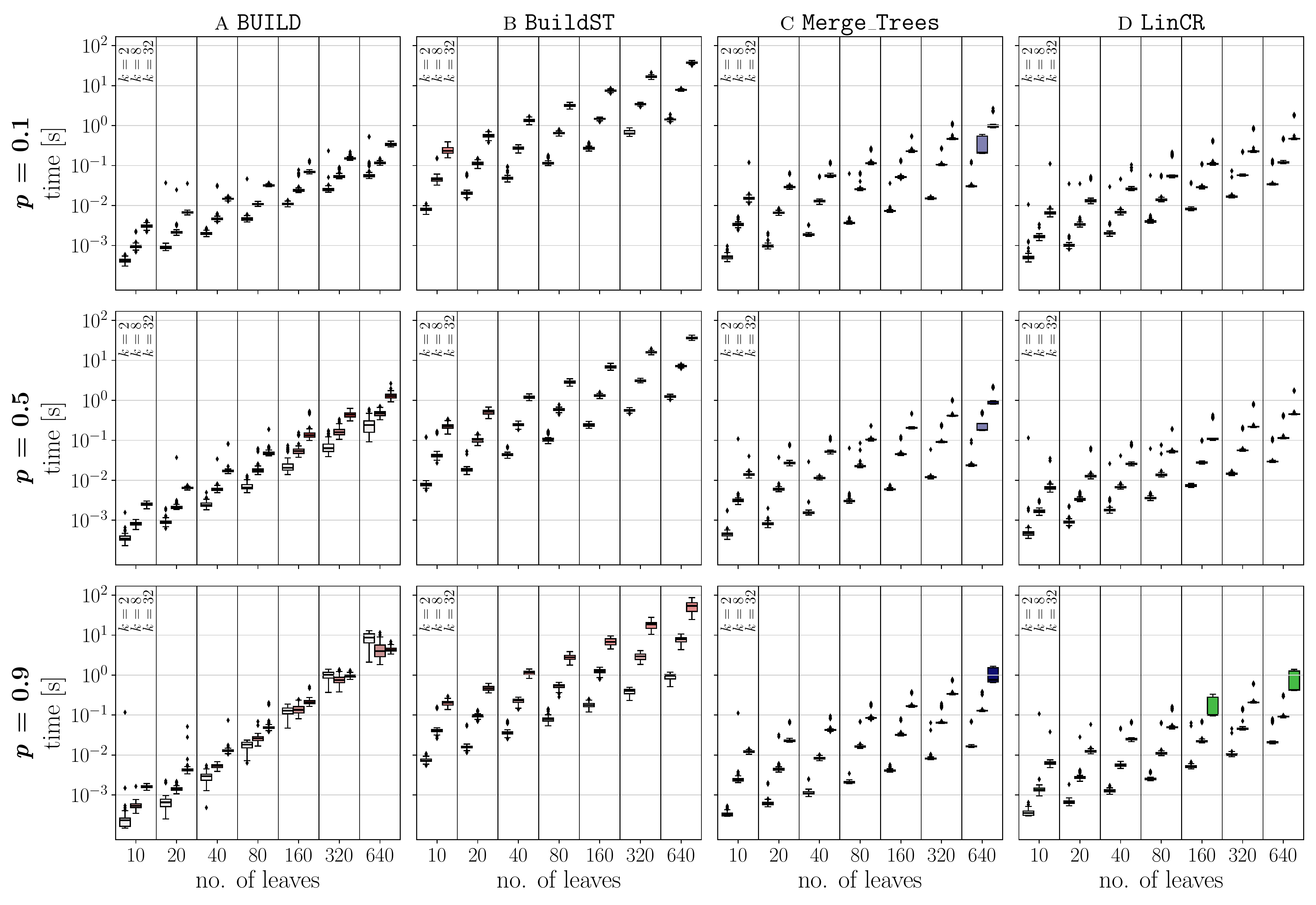}
  \end{center}
  \caption{Running time comparison of the algorithms for the construction
    of a common refinement of $k$ input trees on leaf set $L$.  The
    subplots of each row show boxplots for the running time for different
    numbers of leaves $|L|$ (indicated on the x-axis) and different values
    of $k\in\{2, 8, 32\}$ (indicated in the leftmost column of each
    subplot).  In each row, a different probability $p\in\{0.1, 0.5, 0.9\}$
    for edge contraction was used to produce the $k$ input trees.  Per
    combination of the parameters $|L|$, $k$, and $p$, 100 instances were
    simulated to which all four algorithm were applied.}
  \label{fig:runtime-boxplots}
\end{figure}

\begin{figure}[t]
  \begin{center}
    \includegraphics[width=0.7\textwidth]{./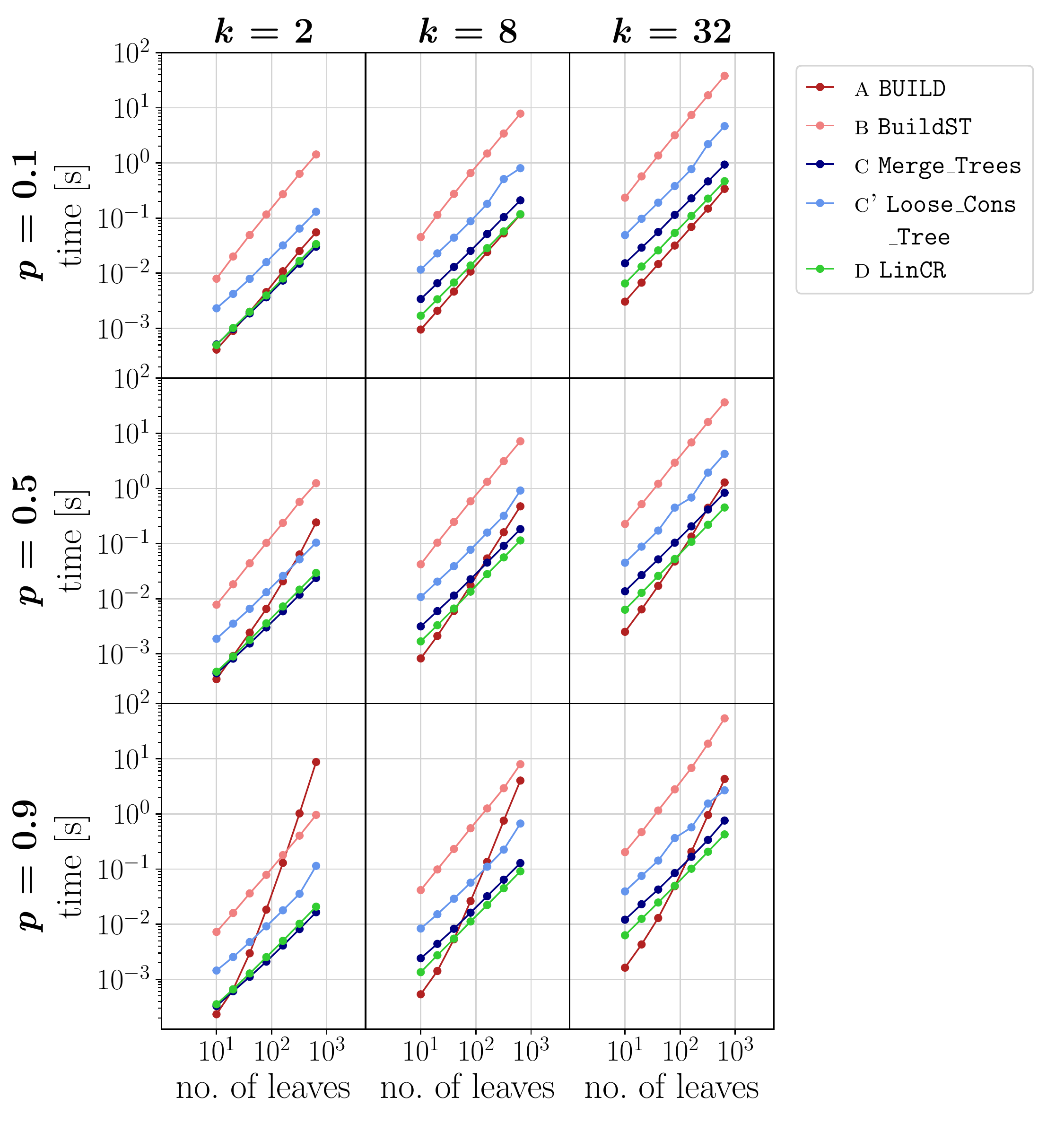}
  \end{center}
  \caption{Running time comparison of the algorithms for the construction
    of a common refinement of $k$ input trees on leaf set $L$.  Per
    combination of the parameters $|L|$ (indicated on the horizontal axis),
    $k$ (columns), and $p$ (rows), 100 instances were simulated and median
    values are shown for all algorithms.  In each row, a different
    probability $p\in\{0.1, 0.5, 0.9\}$ for edge contraction was used to
    produce the $k$ input trees.}
  \label{fig:runtime-medians}
\end{figure}

The running time comparisons were performed using \texttt{tralda} on an
off-the-shelf laptop (Intel\textsuperscript{\textregistered}
Core\texttrademark\;i7-4702MQ processor, 16~GB RAM, Ubuntu 20.04, Python
3.7).  The time required to compute a least resolved common refinement of
the input trees is included in the respective total running time shown in
Figs.~\ref{fig:runtime-boxplots} and~\ref{fig:runtime-medians}. The
empirical performance data are consistent with the theoretical result that
\texttt{LinCR} scales linearly in $k|L|$.  In particular, the median
running times scales linearly with $|L|$, as shown by the slopw of
$\approx 1$ in the log/log plot for the running times of \texttt{LinCR} in
Fig.~\ref{fig:runtime-medians}.

In accordance with the theoretical complexity of $O(k|L| \log^2(k|L|))$ for
the common refinement problem, the performance curve of \texttt{BuildST} is
almost parallel to that of \texttt{LinCR}; however, its computation cost is
higher by almost two orders of magnitude.  For both algorithms, the
contraction probability $p$ appears to have little effect on the running
time. In both cases, a larger value of $p$ (i.e., a lower average
resolution of the input trees) leads to a moderate decrease of the running
time.

In contrast, the resolution of the input trees has a large impact on the
efficiency of \texttt{BUILD}. It also scales nearly linearly when the
resolution of the individual input trees $T_i$ is comparably high (and even
terminates faster than \texttt{LinCR} up until a few hundred leaves, cf.\
top-right panel), whereas its performance drops drastically with increasing
$p$, i.e., for poorly resolved input trees.  The reason for this is most
likely the cardinality of a minimal triple set that represents the set of
input trees.  For binary trees, the cardinality of the triple set of $T_i$
equals the number of inner edges \cite{Gruenewald:07a}, i.e., there are
$O(|L|)$ triples. For very poorly resolved trees, on the other hand,
$O(|L|^2)$ triples are required \cite{Seemann:18}, matching the differences
of the slopes with $p$ observed for \texttt{BUILD} in
Fig.~\ref{fig:runtime-boxplots}.

As expected, the curves of the two $O(k|L|)$ algorithms
\texttt{Merge_Trees} and \texttt{Loose_Cons_Tree} are also almost parallel
to that of \texttt{LinCR}.  For $k=2$, we can even observe that
\texttt{Merge_Trees} is slightly faster than \texttt{LinCR}. However, the
smaller number of necessary tree traversals in \texttt{LinCR} apparently
becomes a noticeable advantage with an increasing number $k$ of input
trees.  The additional tree processing steps in the more practically
relevant \texttt{Loose_Cons_Tree} algorithm, furthermore, result in a
longer running time compared to our new approach.

\section*{Concluding Remarks} 

We developed a linear-time algorithm to compute the common refinement of
trees on the same leaf set.  In contrast to the ``classical'' supertree
algorithms \texttt{BUILD} and \texttt{BuildST}, \texttt{LinCR} uses a
bottom-up instead of a top-down strategy. This is similar to
\texttt{Loose_Cons_Tree} and its subroutine \texttt{Merge_Trees}
\cite{Jansson:16}, which can also be used to obtain the common refinement
of trees on the same leaf set in linear time.  \texttt{LinCR}, however,
requires fewer tree traversals and is, in our opinion, simpler to
implement. In contrast to \texttt{Merge_Trees}, \texttt{LinCR} in
particular does not rely on a data structure that enables linear-time tree
preprocessing and constant-time last common ancestor queries for the nodes
in the tree \cite{Bender:05}.  All algorithms were implemented in Python
and are freely available for download from
\url{https://github.com/david-schaller/tralda} as part of the
\texttt{tralda} library. Empirical comparisons of running times show that
\texttt{LinCR} consistently outperforms the linear-time alternatives.
Only \texttt{BUILD} is faster for very small instances and moderate-size
trees that are nearly binary.

Although it may be possible to improve Alg.~\ref{alg:refinement-k-trees} by
a constant factor, it is asymptotically optimal, since the input size is
$O(k|L|)$ for $k$ trees with $|L|$ leaves. Furthermore, trivial solutions
can be obtained in some limiting cases. For instance, if $|V(T_i)|=2|L|-1$,
then $T_i$ is binary, i.e., no further refinement is possible. In this case,
we can immediately use $T=T_i$ as the only viable candidate and only check
that $T_j$ displays all other $T_j$.  However, we cannot entirely omit
Lines~\ref{line:ell-i}--\ref{line:pi-case-3} in this case since we require
the sets $J(v)$ as well as the correspondence between the vertices in order
to check whether $T$ displays every $T_i$. 

It is worth noting that the idea behind \texttt{LinCR} does not generalize
to more general supertree problems. The main reason is that the set
inclusions employed to determine $\prec_T$ do not carry over to the more
general case because the inclusion order of $C_1,C_2\in\HH(T)$ cannot be
determined from $C_1\cap L(T_i)$ and $C_2\cap L(T_j)$ for two trees with
$L(T_i),L(T_j)\subsetneq L(T)$.

Depending on the application, a negative answer to the existence of a
common refinement may not be sufficient. One possibility is to resort to
the loose consensus tree or possibly other notions of consensus trees, see
e.g.\ \cite{Bremer:90,Bryant:03}. A natural alternative approach is to
extract a maximum subset of consistent triples from
$\bigcup_{i=1}^k r(T_i)$. This problem, however, is known to be NP-hard for
arbitrary triple sets, see e.g.\ \cite{Byrka:10} and the references
therein.

\bigskip\noindent
\textbf{Acknowledgements.} 
This work was supported in part by the German Research Foundation
(\emph{DFG}), proj.\ no.\ STA850/49-1.

\bigskip\noindent
\textbf{Availability of data and materials.}
Implementations of the algorithms used in this contribution are available
at \url{https://github.com/david-schaller/tralda} as part of the
\texttt{tralda} library.

\bibliography{CR-preprint2}

\end{document}